\newcommand{\R}{{\ifmmode\mathbb{R}\else$\mathbb{R}$\fi}}
\newcommand{\K}{{\ifmmode\mathbb{K}\else$\mathbb{K}$\fi}}
\newcommand{\N}{{\ifmmode\mathbb{N}\else$\mathbb{N}$\fi}}
\newcommand{\C}{{\ifmmode\mathbb{C}\else$\mathbb{C}$\fi}}
\newcommand{\Z}{{\ifmmode\mathbb{Z}\else$\mathbb{Z}$\fi}}
\newcommand{\D}{\mathrm{d}}
\newcommand{\I}{\mathrm{i}}
\newcommand{\gei}{{\I}}
\newcommand{\fracd}[2]{\frac{\displaystyle #1}{\displaystyle #2}}
\newcommand{\partialder}[2]{\frac{\displaystyle\partial #1}{\displaystyle\partial #2}}
\newcommand{\totder}[2]{\frac{\displaystyle\D #1}{\displaystyle\D #2}}
\newcommand{\abs}[1]{{\left| #1 \right|}}
\newcommand{\norm}[1]{{\left\| #1 \right\|}}
\newcommand{\up}[1]{{\text{#1}}}
\newcommand{\bvec}{\mathbf{b}}
\newcommand{\qvec}{\mathbf{q}}
\newcommand{\fvec}{\mathbf{f}}
\newcommand{\zvec}{\mathbf{z}}
\newcommand{\tvec}{\mathbf{t}}
\newcommand{\Amat}{\mathbf{A}}
\newcommand{\Tmat}{\mathbf{T}}
\newcommand{\Fmat}{\mathbf{F}}
\newcommand{\Omegamat}{\mathbf{\Omega}}
\newcommand{\Gammamat}{\mathbf{\Gamma}}
\newcommand{\Phimat}{\mathbf{\Phi}}
\newcommand{\xvec}{\mathbf{x}}
\newcommand{\vvec}{\mathbf{v}}
\newcommand{\yvec}{\mathbf{y}}
\newcommand{\evec}{\mathbf{e}}
\newcommand{\avec}{\mathbf{a}}
\newcommand{\pvec}{\mathbf{p}}
\newcommand{\rvec}{\mathbf{r}}
\newcommand{\xivec}{\boldsymbol{\xi}}
\newcommand{\omegamat}{\boldsymbol{\omega}}
\newcommand{\Mmat}{\mathbf{M}}
\newcommand{\Kmat}{\mathbf{K}}
\newcommand{\esp}{\operatorname{\mathrm{e}}}
\newcommand{\espo}[1]{\esp^{\displaystyle #1}}
\newcommand{\dia}{\operatorname{\up{diag}}}
\newcommand{\diag}[1]{\dia\left\{\displaystyle #1\right\}}
\newcommand{\Real}{\operatorname{Re}}
\newcommand{\real}[1]{{\Real\left\{#1 \right\}}}
\newcommand{\Imag}{\operatorname{Im}}
\newcommand{\imag}[1]{{\Imag\left\{#1 \right\}}}
\newcommand{\alphavec}{\boldsymbol{\alpha}}
\newcommand{\betavec}{\boldsymbol{\beta}}
\newtheorem{lemma}{Lemma}
\newtheorem{theorem}{Theorem}
\begin{document}


\title{Application of Floquet theory to dynamical systems with memory}


\author{Fabio L. Traversa}
\email[]{fabiolorenzo.traversa@uab.es}
\affiliation{Departament d'Enginyeria Electr\`onica, Universitat Aut\`onoma de Barcelona, 08193-Bellaterra (Barcelona), Spain}
\affiliation{Department of Physics, University of California, San Diego, La Jolla, California 92093, USA}

\author{Massimiliano Di Ventra}
\email[]{diventra@physics.ucsd.edu}
\affiliation{Department of Physics, University of California, San Diego, La Jolla, California 92093, USA}

\author{Federica Cappelluti}
\email[]{federica.cappelluti@polito.it}
\affiliation{Dipartimento di Elettronica e Telecomunicazioni, Politecnico di Torino, 10129 Torino, Italy}

\author{Fabrizio Bonani}
\email[]{fabrizio.bonani@polito.it}
\affiliation{Dipartimento di Elettronica e Telecomunicazioni, Politecnico di Torino, 10129 Torino, Italy}


\date{\today}

\begin{abstract}
We extend the recently developed generalized Floquet theory [Phys. Rev. Lett. {\bf 110}, 170602 (2013)] to 
systems with infinite memory. In particular, we show that a lower asymptotic bound exists for the
Floquet exponents associated to such cases. As examples, we analyze the cases of an ideal 1D system, a Brownian particle, and a circuit resonator with an ideal transmission line. All these examples show the usefulness of this new approach
to the study of dynamical systems with memory, which are ubiquitous in science and technology.

\end{abstract}

\pacs{05.45.-a, 02.30.Oz, 02.30.Sa, 84.30.Bv}

\maketitle

\section{Introduction}

Floquet theory is a fundamental tool for expressing the solution of linear differential equations that have periodic coefficients \cite{Floquet,Farkas}. These types of equations are very common in several areas of science and technology, such as quantum \cite{Shirley,buttiker} and classical \cite{Gammaitoni,Haken} physics, chemistry \cite{Boland}, electronics \cite{Suarez,milanesi09,TMTT,Cappelluti2013}, noise analysis \cite{Demir1,noi,noiFN,IET}, and in general dynamical systems \cite{Guckenheimer}. In particular, it provides a versatile tool for the stability analysis of physical systems characterized by a periodic steady-state. However, till very recently Floquet theorem was limited to systems whose features depend instantaneously on time, i.e., described by memoryless equations.

On the other hand, systems with memory are by far more common than memoryless ones \cite{Max_review}, with an extremely wide range of applications \cite{Max_Proc_IEEE,TraversaNN}. This led some of us (FT, MD, and FB) to prove a generalization of Floquet theory, extended to a wide class of systems described by \textit{linear} memory operators \cite{PRL}. In that paper we have only provided the fundamental theorem of this generalized Floquet theory, and as a corollary we have proved Bloch's theorem for non-local (in space) potentials.

In the present contribution, we apply the general theorem proved in \cite{PRL} to the case of systems with linear memory, providing a formal extension of the theorem of Ref.~\onlinecite{PRL} to systems whose memory is infinite, and showing that a lower asymptotic bound exists for the Floquet exponents associated to such cases. Furthermore, we also provide a general numerical tool based on the harmonic balance method \cite{Sangiovanni,IJCTA} aimed at the numerical assessment of stability. As practical examples, we discuss the analysis of an ideal one-dimensional dynamical system with memory, of a more
realistic Brownian particle with memory, and of a circuit resonator with ideal transmission line.

We consider here a nonlinear dynamical system expressed as
\begin{equation}
\totder{\zvec}{t}=\fvec(\zvec(t),t)+\int_{-\infty}^t \Kmat(t,\tau)\xvec(\tau)~\D\tau,
\label{xu}
\end{equation}
where $\xvec,\fvec\in\mathbb{R}^n$ and $\Kmat\in\mathbb{R}^{n\times n}$ is the kernel representing the memory effects. We assume that \eqref{xu} admits a solution $\zvec_\up{S}(t)$ periodic of period $T$ (limit cycle), and that $\Kmat$  is such that
\begin{subequations}
\label{cond}
\begin{align}
&\Kmat(t,\tau)=\Kmat(t+T,\tau+T) \label{cond1}\\
&\int_{-\infty}^t \norm{\Kmat(t,\tau)}~\D\tau<\infty\qquad \forall t
\label{cond2}
\end{align}
\end{subequations}
where $\norm{\cdot}$ is a properly defined norm.
Notice that if the memory part is time-invariant, i.e., if $\Kmat(t,\tau)=\Kmat(t-\tau)$, condition \eqref{cond1} is always satified.

The stability of the limit cycle $\zvec_\up{s}(t)$ is defined by the variational problem
\begin{equation}
\totder{\yvec}{t}=\Amat(t)\yvec(t)+\int_{-\infty}^t \Kmat(t,\tau)\yvec(\tau)~\D\tau,
\label{gu}
\end{equation}
where $\yvec(t)=\zvec(t)-\zvec_\up{S}(t)$ is the cycle perturbation, and $\Amat(t)$ is the $T$-periodic Jacobian matrix of $\fvec(\zvec(t),t)$, with respect to $\zvec$, calculated in the limit cycle.

\section{Generalized Floquet theorem}

The generalization of Floquet theorem proved in \cite{PRL} shows that the state transition matrix of \eqref{gu} can be expressed as
\begin{equation}
\Phimat(t;t_0)=\Mmat(t;t_0)\espo{\Fmat(t-t_0)}
\end{equation}
where $\Mmat(t;t_0)$ is $T$-periodic with respect to both time variables, and $\Fmat$ is a constant matrix whose eigenvalues constitute the cycle Floquet exponents. With respect to a memoryless systems, however, the size $p$ of $\Fmat$ may be larger than $n$, and even infinite.

Thus, the general solution of \eqref{gu} can be expressed as a linear combination of $p$ exponential functions, characterized by the Floquet exponent $\lambda$, times a $T$ periodic function $\rvec(t)=\rvec(t+T)$, the (direct) Floquet eigenvector. This means that we seek solutions of \eqref{gu} in the form
\begin{equation}
\yvec(t)=\rvec(t)\espo{\lambda t}\qquad \rvec(t)=\rvec(t+T).
\label{gensol}
\end{equation}
Due to the $T$-periodicity of the Floquet eigenvector, we have actually a set of $p$ different \textit{classes} of values for $\lambda$. In fact if $\lambda_0$ is one of the eigenvalues of $\Fmat$, each $\lambda_0+k\gei 2\pi/T$ ($k\in\mathbb{Z}$) spans the same eigenspace: we call this phenomenon the \textit{splitting} of the eigenvalues. To reduce to a minimum the number of significant quantities, it is customary to define the Floquet multipliers $\mu=\exp(\lambda T)$, since the exponential function eliminates the splitting phenomenon. Of course, the stability of the solution of \eqref{xu} depends on the sign of the real part of $\lambda$, or equivalently on the magnitude of $\mu$.

The defining equation for the Floquet eigenvalues (and direct eigenvectors) can be found substituting \eqref{gensol} into \eqref{gu}. We find
\begin{equation}
\totder{\rvec}{t}+\lambda\rvec(t)=\Amat(t)\rvec(t)+\qvec(\rvec,t,\lambda)
\label{FEdeft}
\end{equation}
where
\begin{equation}
\qvec(\rvec,t,\lambda)=\int_{-\infty}^t \Kmat(t,\tau)\rvec(\tau)\espo{\lambda(\tau-t)}~\D\tau\label{memory_operator}
\end{equation}
is a $T$-periodic function of $t$ because all the other terms of \eqref{FEdeft} are.

The convergence of the integral defining $\qvec(\rvec,t,\lambda)$ is not trivially derived from \eqref{cond2}. In order to clarify the matter, we start by proving the following Lemma:
\begin{lemma}\label{lemma1}
Let us consider \eqref{gu} where the memory kernel satisfies \eqref{cond}. We consider a real $\bar{s}$ and any $s>\bar{s}$. Then, the solutions ${\yvec}(t)={\rvec}(t)\exp({\lambda} t)$ and $\bar{\yvec}(t)=\bar{\rvec}(t)\exp(\bar{\lambda} t)$ of the variational problems (with finite memory)
\begin{subequations}
\label{finitemem}
\begin{align}
&\totder{\rvec}{t}+\lambda\rvec(t)=\Amat(t)\rvec(t)+\int_{t-s}^t \Kmat(t,\tau)\rvec(\tau)\espo{\lambda(\tau-t)}~\D\tau \label{finitemem1}\\
&\totder{\bar\rvec}{t}+\bar\lambda\bar\rvec(t)=\Amat(t)\bar\rvec(t)+\int_{t-\bar s}^t \Kmat(t,\tau)\bar\rvec(\tau)\espo{\bar\lambda(\tau-t)}~\D\tau
\label{finitemem2}
\end{align}
\end{subequations}
satisfy:
\begin{subequations}
\label{conditions}
\begin{equation}
\abs{\lambda-\bar\lambda}\le M \int_{t-s}^{t-\bar s}\max_{t\in[0,T]}\norm{\Kmat(t,\tau)}~\D\tau \label{conditions1}
\end{equation}
and for $\bar s\gg 0$
\begin{align}
& \abs{\lambda-\bar\lambda}\le M_- \int_{t-s}^{t-\bar s}\fracd{\max_{t\in[0,T]}\norm{\Kmat(t,\tau)}}{t-\tau}~\D\tau \nonumber\\
&\qquad\qquad \qquad\qquad \qquad \qquad \qquad  \up{if $\real{\bar\lambda}<0$}\label{conditions2}\\
&
\abs{\lambda-\bar\lambda}\le M_+ \int_{t-s}^{t-\bar s}\max_{t\in[0,T]}\norm{\Kmat(t,\tau)}\espo{-\real{\bar\lambda}(t-\tau)}~\D\tau  \nonumber\\
&\qquad\qquad \qquad\qquad \qquad \qquad \qquad  \up{if $\real{\bar\lambda}>0$}\label{conditions3}
\end{align}
\end{subequations}
where $0<M_+,M_-\le M< +\infty$.
\end{lemma}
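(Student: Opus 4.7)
The plan is to subtract problems (\ref{finitemem1}) and (\ref{finitemem2}) in a way that isolates the truncated ``tail'' of the memory integral and relates it directly to the spectral shift $\lambda-\bar\lambda$. Plugging the pair $(\rvec,\lambda)$ into equation (\ref{finitemem2}), the resulting residual is precisely
\begin{equation*}
\mathcal{R}(t)=\int_{t-s}^{t-\bar s}\Kmat(t,\tau)\rvec(\tau)\espo{\lambda(\tau-t)}~\D\tau,
\end{equation*}
since $(\rvec,\lambda)$ already satisfies (\ref{finitemem1}) and the two operators differ only in the lower limit of their memory integrals. Viewing $(\bar\rvec,\bar\lambda)$ as a reference solution of (\ref{finitemem2}) and writing $\rvec=\bar\rvec+\fluct{\rvec}$, $\lambda=\bar\lambda+\fluct{\lambda}$, a first-order expansion recasts the residual as $\fluct{\lambda}\,\bar\rvec+\mathcal{L}_0[\fluct{\rvec}]$ plus higher-order contributions, where $\mathcal{L}_0$ is the Fr\'echet derivative of (\ref{finitemem2}) at $(\bar\rvec,\bar\lambda)$.

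I would then project this identity onto the left Floquet eigenvector $\psi(t)$ associated with $\bar\lambda$, which lies in the kernel of the adjoint $\mathcal{L}_0^*$ with respect to the $T$-periodic pairing $\int_0^T\psi^{\rm T}(t)(\cdot)~\D t$. The projection kills the $\mathcal{L}_0[\fluct{\rvec}]$ contribution and yields
\begin{equation*}
\fluct{\lambda}\int_0^T\psi^{\rm T}(t)\bar\rvec(t)~\D t=\int_0^T\psi^{\rm T}(t)\mathcal{R}(t)~\D t.
\end{equation*}
Bounding the right-hand side by the triangle inequality, exploiting that $\rvec$ and $\psi$ are $T$-periodic and therefore uniformly bounded, and interchanging the order of integration, the latter is majorized by a constant times $\int_{t-s}^{t-\bar s}\max_{t\in[0,T]}\norm{\Kmat(t,\tau)}\,\abs{\espo{\lambda(\tau-t)}}~\D\tau$. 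Absorbing the supremum of $\abs{\espo{\lambda(\tau-t)}}$ over the finite range into a single constant $M$ gives the crude estimate (\ref{conditions1}).

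For the asymptotic refinements I would retain the exponential factor explicitly and split on the sign of $\real{\bar\lambda}$. When $\real{\bar\lambda}>0$ the integrand already decays in $t-\tau$; for $\bar s\gg 0$ the crude estimate forces $\fluct{\lambda}\to 0$, so $\real{\lambda}\approx\real{\bar\lambda}$ and keeping the exponential in place produces (\ref{conditions3}) with $M_+$ built from the same ingredients as $M$. The case $\real{\bar\lambda}<0$ is genuinely harder because $\espo{\bar\lambda(\tau-t)}$ grows into the past; here the strategy is to exploit the $T$-periodicity of $\rvec$ by partitioning $[t-s,\,t-\bar s]$ into period-long subintervals and integrating $\Kmat(t,\tau)\rvec(\tau)\espo{\lambda(\tau-t)}$ by parts using the primitive of the periodic factor. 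The cancellations of the periodic mean of $\rvec$ on each subinterval, combined with the summability condition (\ref{cond2}), generate contributions of order $1/k\sim 1/(t-\tau)$ per period, which assemble into the algebraic factor appearing in (\ref{conditions2}).

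The principal obstacle is precisely this last case. A naive absolute-value bound is inadmissible, because $\espo{\bar\lambda(\tau-t)}$ is unbounded in the past for $\real{\bar\lambda}<0$, so the structural properties of the Floquet data (the $T$-periodicity of $\rvec$ together with the summability (\ref{cond2})) must be used to trade the dangerous exponential for the algebraic decay $1/(t-\tau)$. A secondary technical point is ensuring that the constants $M,\,M_+,\,M_-$ are uniform in $\bar s$ as $\bar s\to\infty$, which ultimately rests on the non-degeneracy of the Floquet pairing $\int_0^T\psi^{\rm T}(t)\bar\rvec(t)~\D t$.
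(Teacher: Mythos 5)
Your overall skeleton (perturb in the truncation length, linearize, project out the $\delta\rvec$ contribution, bound the tail) matches the paper's strategy, but there is a decisive gap in the linearization itself. The memory integral in \eqref{finitemem2} depends on $\lambda$ \emph{explicitly} through $\espo{\bar\lambda(\tau-t)}$, so the coefficient multiplying $\delta\lambda$ in the first-order expansion is not $\bar\rvec(t)$ but $\pvec(t)=\bar\rvec(t)-\int_{t-\bar s}^t \Kmat(t,\tau)\bar\rvec(\tau)(\tau-t)\espo{\bar\lambda(\tau-t)}\,\D\tau$, exactly the quantity the paper defines in \eqref{kuwa}. This term is not a technicality: a mean-value evaluation (the paper's \eqref{AAAB}--\eqref{baba}) shows that for $\real{\bar\lambda}<0$ it grows like $\bar s\,\espo{-\real{\bar\lambda}\bar s}/\abs{\real{\bar\lambda}}$, and this growth in the \emph{denominator} of the projected identity is what (i) cancels the unbounded factor $\espo{-\real{\bar\lambda}(t-\tau)}$ carried by the truncated tail, so that $M$ in \eqref{conditions1} is finite and uniform in $s$ and $\bar s$, and (ii) leaves the residual $1/\bar s$ suppression that becomes the algebraic factor $1/(t-\tau)$ in \eqref{conditions2}. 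With your coefficient $\int_0^T\psi^{\rm T}\bar\rvec\,\D t=O(1)$, the dangerous exponential survives in the numerator, and your step of ``absorbing the supremum of $\abs{\espo{\lambda(\tau-t)}}$ over the finite range into a single constant $M$'' yields $M\sim\espo{-\real{\lambda}s}$, which diverges as $s$ grows whenever $\real{\lambda}<0$; the uniform bound \eqref{conditions1} therefore does not follow from your argument.

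The replacement mechanism you propose for \eqref{conditions2} also fails on its own terms: the quantity to be tamed is the monotone real exponential $\espo{\abs{\real{\bar\lambda}}(t-\tau)}$, not an oscillatory phase, so period-by-period integration by parts against the periodic factor of $\rvec$ gains at best a fixed factor per subinterval, independent of the depth $t-\tau$ into the past; the claimed ``contributions of order $1/(t-\tau)$ per period'' do not materialize, and no summability of $\norm{\Kmat}$ from \eqref{cond2} can by itself offset exponential growth of the integrand, since $\Kmat$ and $\rvec$ need not oscillate coherently. Note also a structural difference: the paper argues \emph{incrementally} in $\bar s$ (steps $\delta s$, summing $\abs{\delta\lambda_j}$ and letting $N\rightarrow\infty$), which is what converts the pointwise bounds involving $\max_{t\in[0,T]}\norm{\Kmat(t,t-\bar s)}$ into the integral estimates \eqref{conditions}; your one-shot residual bound, even where it is sound, would need the same iteration with $\bar\lambda$ updated along the way. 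For $\real{\bar\lambda}>0$ your derivation of \eqref{conditions3} is essentially correct (there the neglected part of $\pvec$ stays bounded, per \eqref{baba}), but the case $\real{\bar\lambda}<0$ --- which you correctly identify as the crux --- is precisely where the missing $\lambda$-derivative of the memory operator is indispensable.
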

\begin{proof}
See Appendix~\ref{Lemmaproof}.
\end{proof}
Lemma~\ref{lemma1}, starting from \eqref{cond2}, shows that for $\bar s$ large enough, for every $s$, $\abs{\lambda-\bar\lambda}\rightarrow 0$ faster than $\int_{t-s}^{t-\bar s}\norm{\Kmat(t,\tau)}~\D\tau$. This means that the eigenvalue $\lambda$ becomes independent of $\bar s$, i.e., of any finite approximation of the system memory length. The first consequence is numerical: for large enough $\bar s$, the eigenvalues of an infinite memory system can be calculated with a prescribed accuracy. Second, from a theoretical standpoint we have
\begin{theorem}\label{theo1}
The Floquet exponents $\lambda$ of \eqref{gu} satisfy
\begin{equation}
\real{\lambda}>-\min_{t\in[0,T]} k_\up{c}(t)
\label{simba1}
\end{equation}
where the critical exponent $k_\up{c}(t)$ is defined as
\begin{equation}
k_\up{c}(t)=\lim_{\tau\rightarrow -\infty} \fracd{\ln\norm{\Kmat(t,\tau)}}{\tau}.
\label{tembo}
\end{equation}
\end{theorem}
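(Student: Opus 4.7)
My plan is to exploit the convergence requirement of the memory integral $\qvec(\rvec,t,\lambda)$ in \eqref{memory_operator}: for any valid Floquet eigenvalue--eigenvector pair $(\lambda,\rvec)$ obtained from \eqref{FEdeft}, this integral must be finite at every $t$. Since $\rvec(t)$ is $T$-periodic and hence uniformly bounded, say $\norm{\rvec(\tau)}\le R$, the only obstruction to convergence sits at $\tau\to-\infty$, where the integrand's norm is controlled by $R\norm{\Kmat(t,\tau)}\espo{\real{\lambda}(\tau-t)}$ and the decay of $\norm{\Kmat(t,\tau)}$ is precisely encoded in the critical exponent $k_\up{c}(t)$ of \eqref{tembo}.

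First, I would convert \eqref{tembo} into a working asymptotic estimate: for every $\epsilon>0$ and every $t\in[0,T]$ there exists $\tau_0(t,\epsilon)$ such that for all $\tau<\tau_0$
\begin{equation}
\espo{(k_\up{c}(t)+\epsilon)\tau}\le\norm{\Kmat(t,\tau)}\le\espo{(k_\up{c}(t)-\epsilon)\tau}.
\end{equation}
Inserting the upper bound in the tail of $\qvec(\rvec,t,\lambda)$ produces an integrand whose norm is bounded by $R\espo{-\real{\lambda}t}\espo{(k_\up{c}(t)+\real{\lambda}-\epsilon)\tau}$, which is integrable on $(-\infty,\tau_0]$ provided $k_\up{c}(t)+\real{\lambda}-\epsilon>0$; since $\epsilon>0$ is arbitrary, this is equivalent to $\real{\lambda}>-k_\up{c}(t)$. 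In the opposite regime, the lower bound shows that if $\real{\lambda}<-k_\up{c}(t)$ the integrand grows exponentially as $\tau\to-\infty$, while at the boundary $\real{\lambda}=-k_\up{c}(t)$ the integrand fails to decay; in both cases the integral diverges, which forces the strict inequality.

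Since \eqref{FEdeft} must hold for every $t$ with a single $T$-periodic $\rvec$, the pointwise constraint $\real{\lambda}>-k_\up{c}(t)$ has to be satisfied simultaneously on the whole interval $[0,T]$; taking the infimum in $t$ immediately yields \eqref{simba1}.

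The main technical obstacle I foresee is the rigorous link between this pointwise tail estimate and the actual infinite-memory eigenvalue $\lambda$. This is exactly where Lemma~\ref{lemma1} is needed: it constructs $\lambda$ as the limit, for $\bar s\to\infty$, of the finite-memory eigenvalues $\bar\lambda$, and the bounds \eqref{conditions2}--\eqref{conditions3} ensure that the asymptotic behaviour of $\norm{\Kmat(t,\tau)}$ also controls the limiting exponent. A secondary difficulty is the uniformity in $t\in[0,T]$ of the threshold $\tau_0(t,\epsilon)$, which requires mild regularity of $\norm{\Kmat(t,\cdot)}$ on the compact interval, already implicit in the assumptions \eqref{cond}.
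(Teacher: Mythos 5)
Your proposal is correct and follows essentially the same route as the paper's proof in Appendix~\ref{theoproof}: both reduce the theorem to the requirement that the memory integral in \eqref{memory_operator} converge as the memory length tends to infinity, with Lemma~\ref{lemma1} supplying the finite-to-infinite-memory limit and the comparison of $\espo{\real{\lambda}(\tau-t)}$ against the kernel decay rate $k_\up{c}(t)$ of \eqref{tembo} yielding \eqref{simba1}. Your explicit $\epsilon$-quantification of the limit in \eqref{tembo}, together with the pointwise-in-$t$ bound followed by the minimum over $[0,T]$, in fact renders the final step more precise than the paper's terse observation that the integral ``may converge independently of $s$'' once \eqref{simba1} is met.
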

\begin{proof}
See Appendix~\ref{theoproof}.
\end{proof}

\section{Floquet exponents computation}
\label{FEcomp}

Clearly, \eqref{FEdeft} represents a generalized eigenvalue problem whose solution provides the required Floquet quantities for the limit cycle. The explicit expression for such a generalized eigenvalue problem depends on the features of the memory kernel $\Kmat(t,\tau)$. Since all the terms of \eqref{FEdeft} are $T$-periodic, a viable solution strategy is the use of frequency-domain approaches such as the Harmonic Balance (HB) technique \cite{Sangiovanni,IJCTA,AEU,TCAD}, here summarized in Appendix~\ref{AHB}. Frequency transformation of \eqref{FEdeft} yields
\begin{equation}
\Omegamat\tilde\rvec+\lambda\tilde\rvec=\tilde\Amat\tilde\rvec+\tilde\qvec(\tilde\rvec,\omega,\lambda)
\label{FEdeff}
\end{equation}
where $\omega$ is the set of all the frequencies multiple of the fundamental one $\omega_0=2\pi/T$. Eq.~\eqref{FEdeff} is a generalized, transcendental eigenvalue problem in $\lambda$ and $\tilde\rvec$, the collection of the harmonic amplitudes of the Floquet eigenvector $\rvec(t)$.

The explicit form of \eqref{FEdeff} depends on the type of $\Kmat(t,\tau)$. However, in general it can be transformed, at least approximately, into a polynomial eigenvalue problem by formally developing $\tilde\qvec(\tilde\rvec;\lambda)$ into Taylor series as a function of $\lambda$
\begin{equation}
\Omegamat\tilde\rvec+\lambda\tilde\rvec=\tilde\Amat\tilde\rvec+\sum_{k=0}^{+\infty}   \left.\partialder{\tilde\qvec(\tilde\rvec,\omega,\lambda)}{\lambda}\right|_{\lambda=0} \fracd{\lambda^k}{k!}.
\label{FEdeffseries}
\end{equation}
Several techniques are available to tackle the polynomial eigenvalue problems, such as for instance those discussed in \cite{Lanczos,Tisseur,Higham}.

Notice that \eqref{FEdeff} is an exact representation of the generalized, time-domain eigenvalue problem \eqref{FEdeft} only if  infinite Fourier series  are considered. Clearly, for practical calculations the series is truncated to a finite number of harmonics $N_\up{H}$ (see Appendix~\ref{AHB}) and, equivalently, the time domain problem is time-sampled. The truncation affects the accuracy of the Floquet quantities, especially on the exponents, as discussed e.g. in \cite{IJCTA,IET}. However, according to intuition, accurate results can be obtained by properly choosing $N_\up{H}$.

\section{Examples}

\subsection{A simple 1D dynamical system with memory}

The first example is an extremely simple dynamical system with memory, characterized by the 1D variational equation
\begin{equation}
\totder{y}{t}=ay(t)+\int_{t-s}^t \espo{k(\tau-t)}y(\tau)~\D\tau,
\label{pombe}
\end{equation}
where $a$ and $k>0$ are real parameters. Using the harmonic balance approach discussed in Sec.~\ref{FEcomp}, we find the explicit form of \eqref{FEdeff}:
\begin{equation}
(\lambda+\gei\omega_j)r_j=ar_j+\fracd{1-\espo{-s(k+\lambda+\gei\omega_j)}}{k+\lambda+\gei\omega_j}r_j.
\label{ulanzi}
\end{equation}
Since \eqref{pombe} is scalar, we can simplify $r_j$ from \eqref{ulanzi}, and consider the case $\omega_j=0$ since the roots of the general eigenvalue equation are simply those for $\omega_j=0$ plus a shift equal to $-\gei\omega_j$. In other words, we have to study the transcendental eigenvalue equation
\begin{equation}
\lambda=a+\fracd{1-\espo{-s(k+\lambda)}}{k+\lambda}.
\label{ulanzi1}
\end{equation}

\begin{figure}
\centerline{
\includegraphics[width=.866\columnwidth]{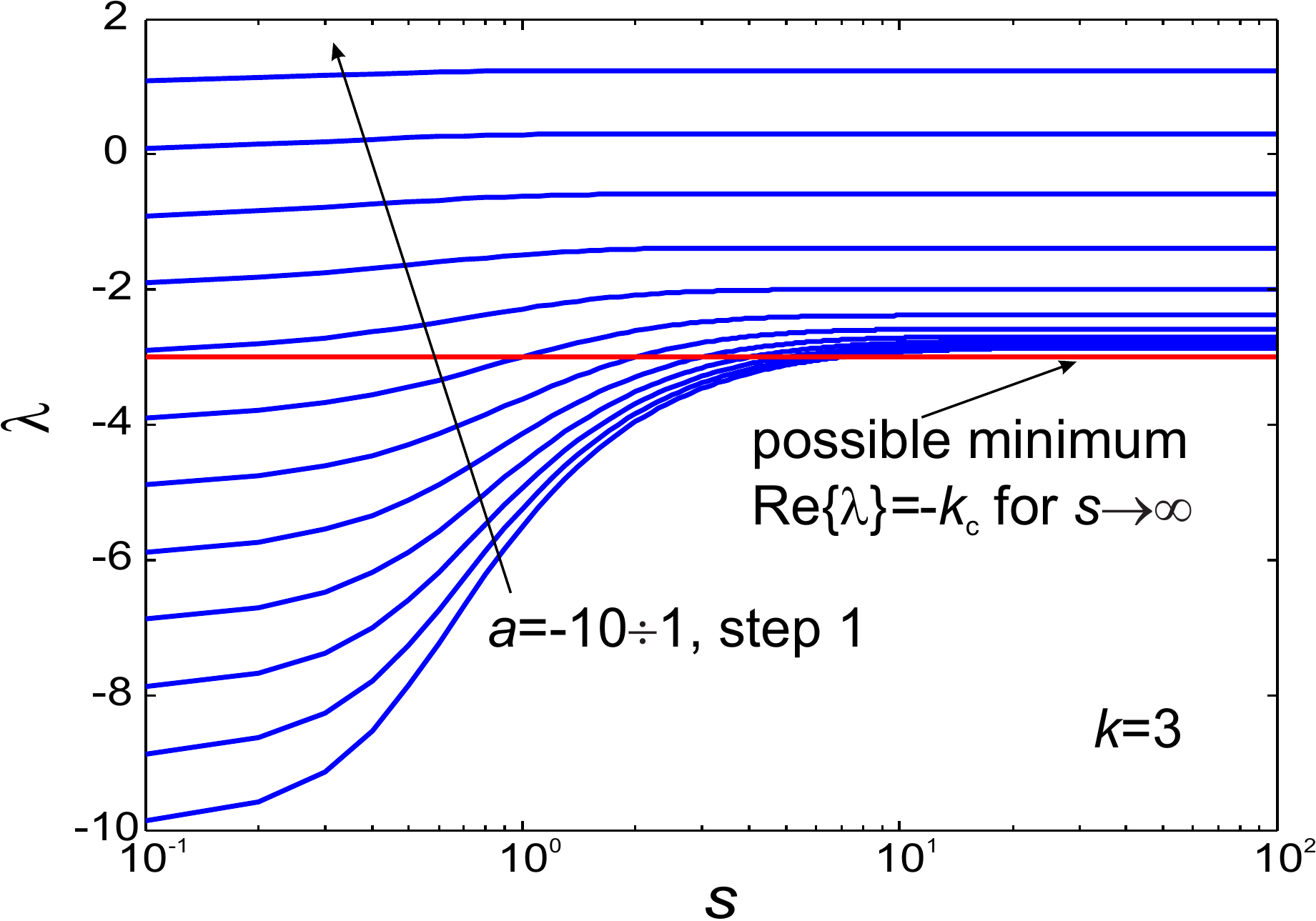}}
\caption{\label{1D1}(Color online) 1D system with memory: Dependence on $s$ of the solution of \eqref{ulanzi1} as a function of the model paramaters. The horizontal line represents the asymptotic possible minimum of $\real{\lambda}$ according to Theorem~\ref{theo1}.}
\end{figure}

Equation \eqref{ulanzi1} has only one solution, whose dependence on the $s$ and $a$ parameters ($k$ is kept fixed at 3) is sketched in Fig.~\ref{1D1}. As expected from Lemma~\ref{lemma1}, for each value of the $a$ parameter, the solution $\lambda$ reaches an asymptotic value for ``long'' enough memory, i.e., for $s$ large enough. On the other hand, since the memory kernel is exponential, the critical exponent is easily calculated as $k_\up{c}=k$, thus setting the lower bound for the real part of $\lambda$ (see Fig.~\ref{1D1}).

\begin{figure}
\centerline{
\includegraphics[width=.866\columnwidth]{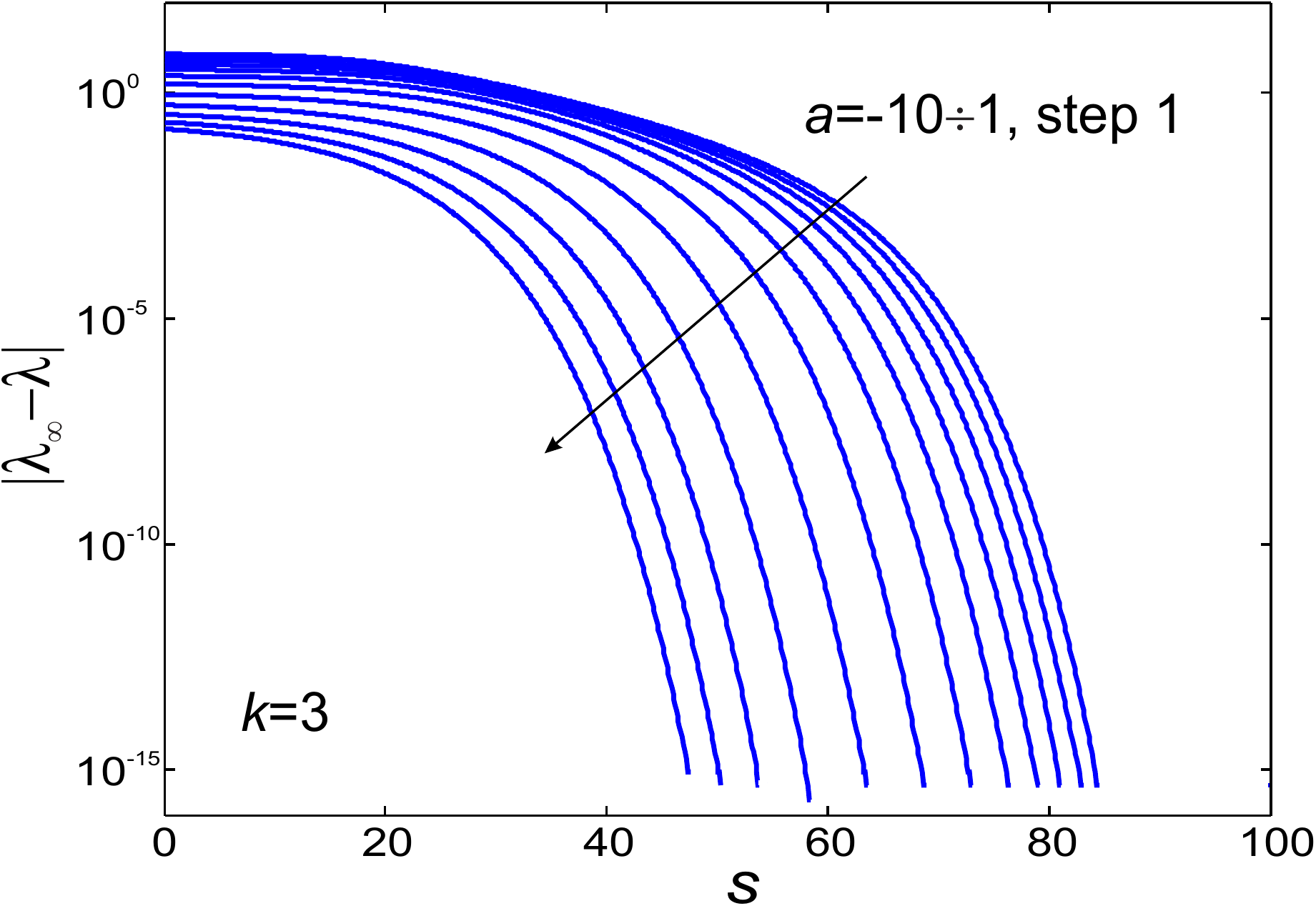}}
\caption{\label{1D2}(Color online) 1D system with memory: Evolution as a function of $s$ of the difference between the computed $\lambda$ and its asymptotic value for $s\rightarrow+\infty$.}
\end{figure}

In order to study the dependence of $\lambda$ on the memory parameter, we show in Fig.~\ref{1D2} the absolute difference with respect to the asymptotic solution $\lambda_\infty$ as a function of $s$ for several $a$ values (and for $k=3$). The solution converges very rapidly to $\lambda_\infty$ as $s$ grows above a threshold, whose value mildly depends on $a$.

\subsection{Brownian particle with memory}

The second example of application that we consider is the 2D Brownian particle with memory introduced in \cite{ICNF2013mem}, that in turn stems from the model studied in \cite{Erdmann2002}.

According to \cite{Kubo}, we introduce memory by exploiting a friction retardation function $\Gamma(t,\tau)$ that modifies the 2D particle equations of motion into \cite{ICNF2013mem} (where $m$ is the particle mass)
\begin{subequations}
\label{mem}
\begin{align}
\totder{\xvec_\up{p}}{t}&=\vvec_\up{p}(t) \\[1ex]
m\totder{\vvec_\up{p}}{t}&=-m\int_{-\infty}^t \Gamma(t,\tau)\vvec_\up{p}(\tau)~\D\tau-\nabla U(\xvec_\up{p})+\xivec(t).\label{mem2}
\end{align}
\end{subequations}
The external potential $U$ represents a central force as in \cite{Erdmann2002}
\begin{equation}
\nabla U(\xvec_\up{p})=m\bar\omegamat^2\xvec_\up{p},
\end{equation}
where $\bar\omegamat=\diag{\bar\omega_1,\bar\omega_2}$ and $\bar\omega_1,\bar\omega_2>0$ are two real parameters.

Assuming a colored Langevin force $\xivec(t)$ described by an Ornstein-Uhlenbeck process, the friction retardation function reads \cite{Hanggi1995}
\begin{equation}
\Gamma(t,\tau)=\gamma k\espo{-k\abs{t-\tau}},
\end{equation}
where $\gamma$ is the friction function, instantaneously dependent on the particle velocity \cite{Kubo}, and $k=1/\tau_\up{n}$, $\tau_\up{n}$ being the noise correlation time.

Taking the ensemble average provides the deterministic system
\begin{subequations}
\label{memnoL}
\begin{align}
\totder{\xvec}{t}&=\vvec(t) \\[1ex]
m\totder{\vvec}{t}&=-m\int_{-\infty}^t \Gamma(t,\tau)\vvec(\tau)~\D\tau-\nabla U(\xvec)
\end{align}
\end{subequations}
where we assume a first order (as a function of $\tau_\up{n}=1/k$) correction to the Rayleigh model \cite{ICNF2013mem}:
\begin{equation}
\gamma[v(\tau)]=-\alpha+\beta v(\tau)^2+\fracd{g}{k}.
\end{equation}
Coefficients $\alpha, \beta, g$ and $k$ are model parameters, while $v$ is the magnitude of the particle velocity $\vvec$. As discussed in \cite{ICNF2013mem}, the case of the memoryless particle is obtained by letting $k\rightarrow+\infty$.

System \eqref{memnoL} admits a periodic limit cycle in the phase space $(\xvec_\up{S}(t),\vvec_\up{S}(t))$ for several values of the parameters. The corresponding stability is assessed following the procedure outlined in Sec.~\ref{FEcomp}. The 4D Floquet eigenvector $\rvec(t)$ is decomposed in the 2D position $\rvec_x(t)$ and velocity $\rvec_v(t)$ components. Substituting into \eqref{FEdeft} we evaluate the integral in \eqref{memory_operator} taking into account the Fourier expansion of $\rvec(t)$ and  Theorem \ref{theo1}, that guarantees $\real{\lambda}>-k$. Thus we get a closed, albeit in infinite series form, expression
\begin{subequations}
\label{memLF}
\begin{align}
\totder{\rvec_x}{t}+\lambda\rvec_x&=\rvec_v \\[1ex]
\totder{\rvec_v}{t}+\lambda \rvec_v&=
-k\sum_{j,h}  \fracd{\tilde\gamma_{\up{e},j-h}\tilde{\rvec}_{v,h}\espo{ \gei\omega_{j}t}}{k+\lambda+\gei\omega_{j}}-\bar{\omegamat}^2\rvec_x
\end{align}
\end{subequations}
where $\tilde{\rvec}_{\alpha,j}$ is the $j$-th harmonic amplitude (in exponential form) for function $\rvec_\alpha(t)$ ($\alpha=x,v$), $j,h\in\mathbb{Z}$ and $\omega_j=j2\pi/T$.

Equation \eqref{memLF} is further transformed into \eqref{FEdeff} exploiting the Fourier series expansion of all the $T$-periodic terms. Balancing the harmonic components, we find the ideally infinite set (as a function of the harmonic index $j$)  of  order 2 polynomial eigenvalue problems (PEP)
\begin{subequations}
\label{memLHBPEP}
\begin{align}
(\lambda+\gei\omega_j)\tilde{\rvec}_{x,j}&=\tilde{\rvec}_{v,j}\\[1ex]
(\lambda+\gei\omega_j)(k+\lambda+\gei\omega_{j})\tilde{\rvec}_{v,j}&=- (k+\lambda+\gei\omega_{j})\bar{\omegamat}^2\tilde{\rvec}_{x,j} \nonumber\\
&\phantom{=}~-k \sum_h \tilde\gamma_{\up{e},j-h}\tilde{\rvec}_{v,h}.
\end{align}
\end{subequations}
As discussed in \cite{Tisseur,Higham}, an $r$-th order PEP for an equation of size $m$ has $r\times m$ eigenvalues, thus \eqref{memLHBPEP} provides a set of $n/2+2(n/2)=3n/2$ eigenvalues (for each harmonic $j$), where for the 4D phase space of the 2D particle $n=4$, i.e., a total of 6 classes of Floquet multipliers.

\begin{figure}
\centerline{
\includegraphics[width=.866\columnwidth]{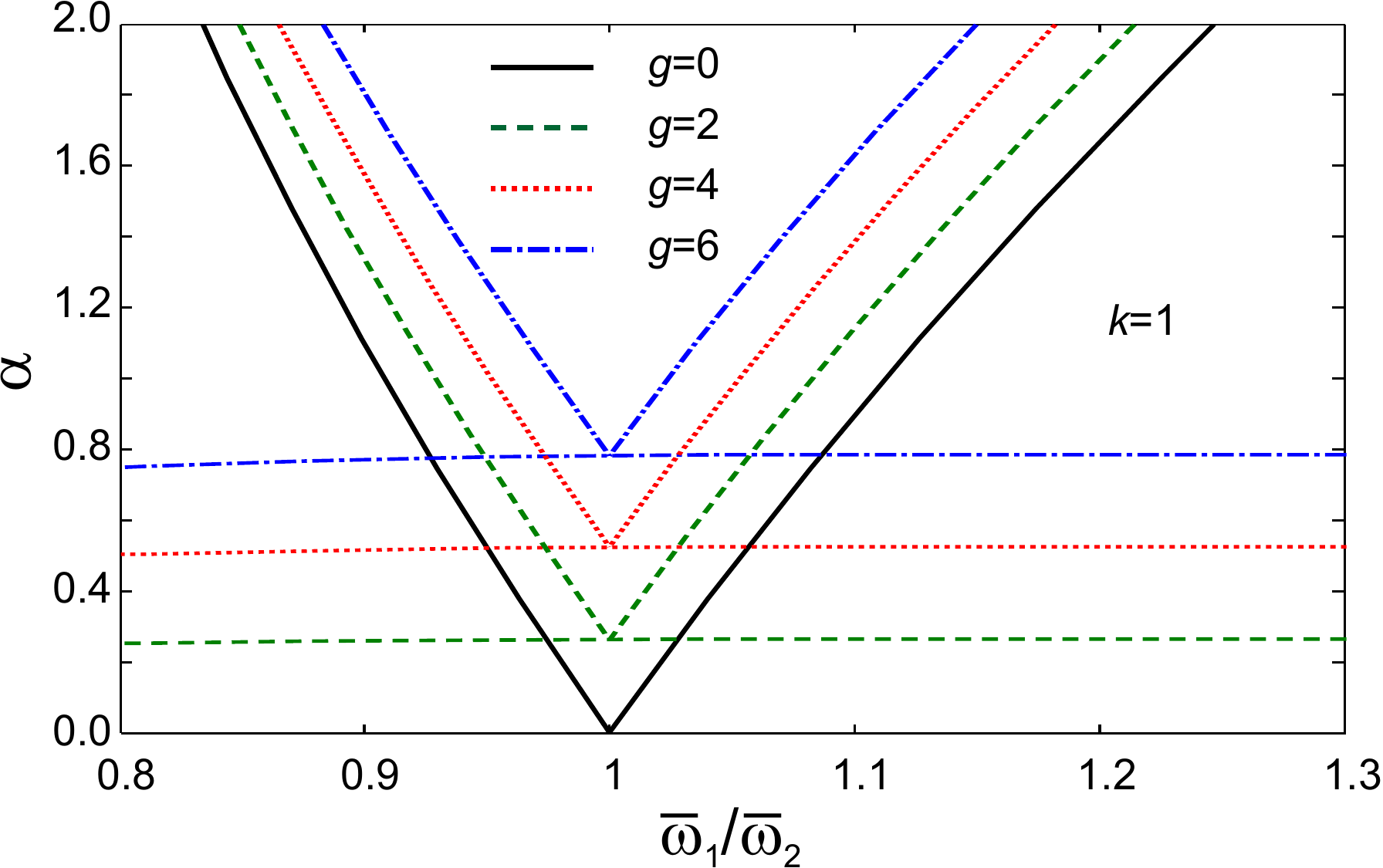}}
\caption{\label{sol} (Color online) Brownian particle with memory: Arnold tongue in the $(\bar\omega_1/\bar\omega_2,\alpha)$ plane as a function of $g$. Parameters: $\beta=1$, $\bar\omega_1=2$, $k=1$.}
\end{figure}

We implemented a numerical solution of the second-order polynomial eigenvalue problem using $N_\up{H}=30$ harmonics, thus truncating \eqref{memLHBPEP} into a system for $j=-N_\up{H},\dots,N_\up{H}$. Furthermore, the limit cycle solution is calculated in the frequency domain exploiting the Harmonic Balance \cite{IJCTA} numerical technique, again making use of 30 harmonics.

With respect to the memoryless case (i.e., $k\rightarrow +\infty$) a stable equilibrium (originally unstable) is found in the origin of the phase space. The bifurcation diagram in the parameter space $(\bar{\omega}_1/\bar{\omega}_2,\alpha)$ is plotted in Fig.~\ref{sol} for $k=1$ and several values of $g$. As expected, Arnold tongues are found. The area of the parameter space below the almost horizontal curve (not present in \cite{Erdmann2002}) corresponds to the stable equilibrium previously discussed.  Above this line, we have either two symmetric, stable limit cicles (the inner part of the Arnold tongue) or a strange attractor (the outer part): both of these have a shape close to that of the memoryless case \cite{Erdmann2002}. The boundaries of the Arnold tongue correspond to fold bifurcations, thus implying that (as for the memoryless case) the strange attractor is generated by the collapse of the two limit cycles. The crossing point between the Arnold tongue and the boundary of the stable equilibrium defines a \textit{triple} point \footnote{The point is triple in the sense that a limit cycle, a stable equilibrium and a strange attractor co-exist for this combination of parameter values.} for the particle dynamics.

\begin{figure}
\centerline{
\includegraphics[width=.9\columnwidth]{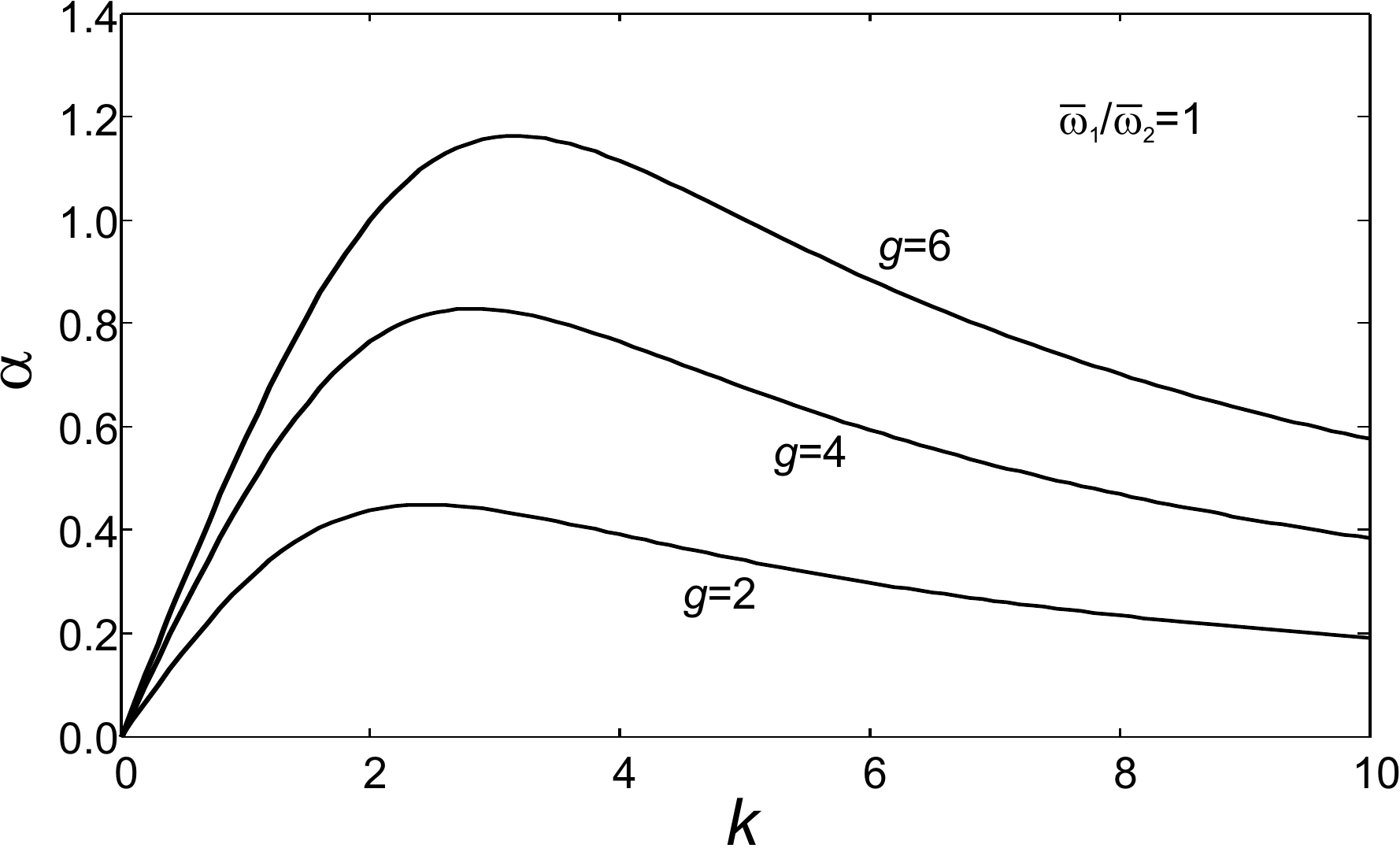}}
\caption{\label{alphakappa}(Color online) Brownian particle with memory: bifurcation curve in the $(k,\alpha)$ plane as a function of $g$ for the triple point. Parameters: $\beta=1$, $\bar\omega_1/\bar\omega_2=1$, $\bar\omega_1=2$.}
\end{figure}

Fig.~\ref{alphakappa} represents the bifurcation curve in the $(k,\alpha)$ plane for the triple point as a function of $g$. Consistently, the bifurcation curves tend to zero for disappearing memory ($k\rightarrow +\infty$).

\subsection{Circuit resonator with ideal transmission line}

\begin{figure}
\centerline{
\includegraphics[width=.4\columnwidth]{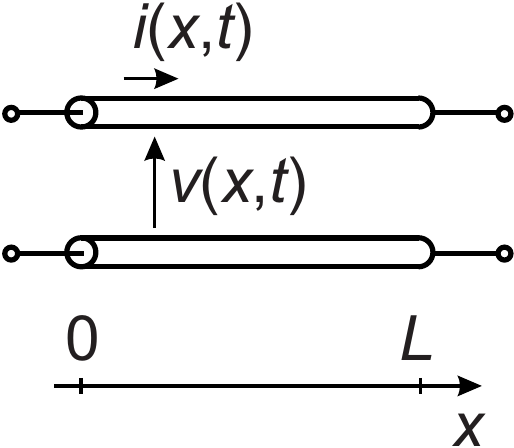}}
\caption{\label{tl}(Color online) Circuit representation of a transmission line (TL).}
\end{figure}

The last example we provide considers the presence of a lossless transmission line (TL) in an electronic circuit. TLs are ubiquitous in circuits for high frequency applications, e.g., in the RF and microwave frequency range \cite{Sorrentino}. The lossless TL is a distributed circuit element characterized by the following linear partial differential system of equations
\begin{subequations}
\label{lineeq}
\begin{align}
\partialder{v}{x}&=-\mathit{l}\partialder{i}{t} \\
\partialder{i}{x}&=-\mathit{c}\partialder{v}{t}
\end{align}
\end{subequations}
where $v(x,t)$ and $i(x,t)$ are, respectively, the voltage and current at time $t$ and position $x$ along the TL (see Fig.~\ref{tl}), $l$ is the TL inductance per unit length, and $c$ the TL capacitance per unit length. The general solution of \eqref{lineeq} is expressed as the sum of a progressive and a regressive wave \cite{Sorrentino}
\begin{subequations}
\label{lineeqsol}
\begin{align}
v(x,t)&=v_-(t-x/v_\up{f})+v_+(t+x/v_\up{f}) \\
i(x,t)&=Y_0v_-(t-x/v_\up{f})-Y_0v_+(t+x/v_\up{f})
\end{align}
\end{subequations}
where $v_\up{f}=1/\sqrt{\mathit{l}\mathit{c}}$ is the TL phase velocity and $Y_0=1/Z_0=\sqrt{\mathit{c}/\mathit{l}}$ is the TL characteristic admittance. The specific shape of $v_-$ and $v_+$ depends on the line boundary and initial conditions, i.e., on the circuit in which the TL is embedded. If we consider the TL embedded into a nonlinear circuit characterized by a set of state variables collectively denoted as $\yvec(t)$, the circuit equations to be solved take the form
\begin{subequations}
\label{lineecir}
\begin{align}
&\mathcal{L}_0\left\{v(0,t),i(0,t),\yvec,\dot\yvec\right\}=0 \\
&\mathcal{L}_L\left\{v(L,t),i(L,t),\yvec,\dot\yvec\right\}=0 \\
&\totder{\yvec}{t}=\qvec(\yvec,t)
\label{lineecir3}
\end{align}
\end{subequations}
where $\dot\yvec=\D\yvec/\D t$, $\mathcal{L}_0$ and $\mathcal{L}_L$ are linear operators, $\qvec$ is a vector function describing the embedding circuit state equations. System \eqref{lineecir} can be easily transformed into a nonlinear system with memory of type \eqref{xu}. In fact, from \eqref{lineeqsol}
\begin{subequations}
\label{lineeqsolporte}
\begin{align}
v(0,t)&=v_-(t)+v_+(t) \\
i(0,t)&=Y_0v_-(t)-Y_0v_+(t) \\
v(L,t)&=\int_{-\infty}^t v_-(\tau)\delta(t-\tau_\up{f}-\tau)\D\tau\nonumber\\
&\qquad+\int_t^{\infty} v_+(\tau)\delta(t+\tau_\up{f}-\tau)\D\tau \label{lineeqsolportec}\\
i(L,t)&=Y_0\int_{-\infty}^t v_-(\tau)\delta(t-\tau_\up{f}-\tau)\D\tau\nonumber\\
&\qquad-Y_0\int_t^{\infty} v_+(\tau)\delta(t+\tau_\up{f}-\tau)\D\tau \label{lineeqsolported}
\end{align}
\end{subequations}
where $\tau_\up{f}=L/v_\up{f}$ represents the delay associated to the TL. Notice that the time anticipation included in the second integral of \eqref{lineeqsolportec} and of  \eqref{lineeqsolported} is only apparent \cite[Chapter 8]{Orta}. In fact, reversing \eqref{lineeqsol} we find
\begin{subequations}
\label{lineeqsolrev}
\begin{align}
v_-(t-x/v_\up{f})&=\fracd{1}{2}v(x,t)+\fracd{1}{2Y_0}i(x,t)\\
v_+(t+x/v_\up{f})&=\fracd{1}{2}v(x,t)-\fracd{1}{2Y_0}i(x,t).
\end{align}
\end{subequations}

The variational problem defining the stability of the solution of \eqref{lineecir} can thus be derived by linearizing the full system and looking for solutions of type \eqref{gensol}, i.e., $\delta v(x,t)=r_v(x,t)\exp(\lambda t)$ and $\delta i(x,t)=r_i(x,t)\exp(\lambda t)$. Correspondingly, from \eqref{lineeqsol} we find $r_{v_-}(t-x/v_\up{f})=r_{v_-}(t)\exp[\lambda(t-x/v_\up{f})]$ and $r_{v_+}(t+x/v_\up{f})=r_{v_+}(t)\exp[\lambda(t+x/v_\up{f})]$. Finally, we get the generalized eigenvalue problem
\begin{subequations}
\label{lineecirlin}
\begin{align}
&\mathcal{L}'_0\left\{r_{v_-},r_{v_+},\rvec_\yvec,\dot\rvec_\yvec+\lambda \rvec_\yvec\right\}=0 \\
&\mathcal{L}'_L\left\{r_{v_-}\espo{-\lambda\tau_\up{f}},r_{v_+}\espo{\lambda\tau_\up{f}},\rvec_\yvec,\dot\rvec_\yvec+\lambda \rvec_\yvec\right\}=0 \\
&\totder{\rvec_\yvec}{t}+\lambda\rvec_\yvec=\Amat(t)\rvec_\yvec
\end{align}
\end{subequations}
where $\Amat(t)$ is the Jacobian of $\qvec$ with respect to $\yvec$, and the linear operators $\mathcal{L}'_0$ and $\mathcal{L}'_L$ are derived from $\mathcal{L}_0$ and $\mathcal{L}_L$ exploiting the linear transformation \eqref{lineeqsol}.

\begin{figure}
\centerline{
\includegraphics[width=.6\columnwidth]{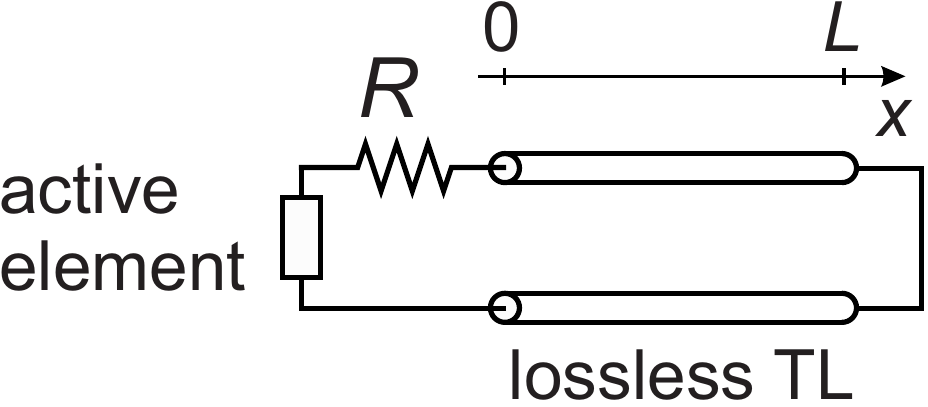}}
\caption{\label{cir}(Color online) Example of distributed resonating circuit containing a lossless TL. The active device is here represented as a constant resistance $R_\up{a}$ that may assume negative values.}
\end{figure}

As an example of application, we consider the distributed resonating circuit in Fig.~\ref{cir}: the active device is used to provide the necessary energy to overcome the dissipation included in resistance $R$. For the sake of simplicity, we assume here that the electrical equivalent of the active element is simply a constant resistance $R_\up{a}$, whose value however may become negative.

In this case, \eqref{lineecir3} disappears, and the two linear operators $\mathcal{L}_0$ and $\mathcal{L}_L$ simply read
\begin{subequations}
\label{lineecirres}
\begin{align}
&v(0,t)+(R+R_\up{a})i(0,t)=0 \\
&v(L,t)=0.
\end{align}
\end{subequations}
Using $v_-$ and $v_+$, \eqref{lineecirres} provide
\begin{subequations}
\label{lineecirres1}
\begin{align}
&v_+(t)=\Gamma_0v_-(t) \\
&v_-(t)=-v_+(t-2\tau_\up{f})
\end{align}
\end{subequations}
where $\Gamma_0=[(R+R_\up{a})Y_0-1]/[(R+R_\up{a})Y_0+1]$ is the reflection coefficient \cite{Sorrentino} measured at the left of position $x=0$ and calculated using $1/Y_0$ as the reference impedance. Clearly, the previous conditions lead to
\begin{equation}
v_+(t)=-\Gamma_0v_+(t-2\tau_\up{f}).
\end{equation}
The eigenvalue problem \eqref{lineecirlin}, finally, is
\begin{equation}
\espo{2\lambda\tau_\up{f}}=-\Gamma_0.
\label{kyenge}
\end{equation}
therefore the bifurcation ($\real{\lambda}=0$) takes place for $\abs{\Gamma_0}=1$. Notice that for finite $R$, the only possible realization is $\Gamma_0=-1$, i.e., as expected, $R_\up{a}=-R$. In this case, the admitted oscillation frequencies are defined by the complex roots of
\begin{equation}
\espo{2\lambda\tau_\up{f}}=1
\end{equation}
i.e., $\lambda=\gei 2k\pi f$ where $f= 1/(2\tau_\up{f})$ and $k\in\Z$.

\section{Conclusion}

In this paper we have proved an important extension of the recently developed generalized Floquet theory~\cite{PRL} to systems supporting infinite memory. In particular, we have proved that a lower asymptotic bound exists for the
Floquet exponents of such cases. We have then analyzed three cases of systems with memory: an ideal 1D system, a Brownian particle, and a circuit resonator with an ideal transmission line. Owing to the fact that dynamical systems with memory are ubiquitous in science and technology, we expect our generalized Floquet theory will find numerous applications in diverse fields. 

\section{Acknowledgments} This work has been partially supported by the Spanish
Project TEC2011-14253-E, NSF grant No. DMR-0802830 and
the Center for Magnetic Recording Research at UCSD. The authors also acknowledge prof. Renato Orta of Politecnico di Torino for useful discussions on transmission line theory.

\appendix
\section{Proof of Lemma~\ref{lemma1}}
\label{Lemmaproof}

Let us consider a time increment $\delta s\ll \bar s$. The solution for $\bar s+\delta s$ is expressed as $\rvec(t)=\bar\rvec(t)+\delta\rvec(t)$ and $\lambda=\bar\lambda+\delta\lambda$. Since $\delta s$ is small, linearizing \eqref{finitemem1} we find the (first order) relationship between the Floquet quantities variations
\begin{equation}
\pvec(t)\delta\lambda+\delta\vvec(t)=\int_{t-\bar s-\delta s}^{t-\bar s} \Kmat(t,\tau)\rvec(\tau)\espo{\bar\lambda(\tau-t)}~\D\tau
\label{AAA}
\end{equation}
where we have defined the $T$ periodic functions
\begin{align}
\pvec(t)&= \rvec(t)-\int_{t-\bar s}^t \Kmat(t,\tau)\bar\rvec(\tau)(\tau-t)\espo{\bar\lambda(\tau-t)}~\D\tau, \label{kuwa}\\
\delta\vvec(t)=&\totder{\delta\rvec}{t}+\bar\lambda\delta\rvec(t)-\Amat(t)\delta\rvec(t)\nonumber\\
& - \int_{t-\bar s}^t \Kmat(t,\tau)\delta\rvec(\tau)\espo{\bar\lambda(\tau-t)}~\D\tau.
\end{align}
Since $\delta s$ is small, we can approximate the integral in \eqref{AAA} as
\begin{align}
& \int_{t-\bar s-\delta s}^{t-\bar s} \Kmat(t,\tau)\rvec(\tau)\espo{\bar\lambda(\tau-t)}~\D\tau \nonumber\\
&\quad\approx \Kmat(t,t-\bar s-\delta s/2)\rvec(t-\bar s-\delta s/2)\espo{-\bar\lambda(\bar s+\delta s/2)}.
\label{AAA1}
\end{align}

We now define the scalar product between $T$-periodic functions
\begin{equation}
(\avec(t),\bvec(t))=\fracd{1}{T}\int_0^T \avec^\dagger(t)\bvec(t)~\D t,
\end{equation}
where $^\dagger$ denotes hermitian conjugation, and we consider a versor $\evec(t)$ orthogonal to $\delta\vvec(t)$. Equations \eqref{AAA} and \eqref{AAA1} yield
\begin{align}
(\evec(t),\pvec(t))\delta\lambda\approx &\left(\evec(t),\Kmat(t,t-\bar s-\delta s/2)\rvec(t-\bar s-\delta s/2)\vphantom{\espo{-\bar\lambda(\bar s+\delta s/2)}}\right.\nonumber\\
&\left.\times\espo{-\bar\lambda(\bar s+\delta s/2)}\right)\delta s.
\label{AAAproj}
\end{align}
Defining $\abs{\alphavec}$ as the vector made of the collection of the absolute values of the components of $\alphavec(t)$, \eqref{AAAproj} implies
\begin{align}
&\abs{(\evec(t),\pvec(t))\delta\lambda}\le  \left|\left( \abs{\evec(t)}, \norm{\Kmat(t,t-\bar s-\delta s/2)} \right.\vphantom{\espo{-\bar\lambda(\bar s+\delta s/2)}}\right.\nonumber\\
&\qquad \times\left.\left.\abs{\rvec(t-\bar s-\delta s/2)} \right) \espo{-\bar\lambda(\bar s+\delta s/2)}\right|\nonumber\\
&\quad\le \left( \abs{\evec(t)}, \abs{\rvec(t-\bar s-\delta s/2)} \right) \espo{-\real{\bar\lambda}(\bar s+\delta s/2)}\nonumber\\
&\qquad \times \max_{t\in[0,T]} \norm{\Kmat(t,t-\bar s-\delta s/2)}
\label{AAAproj1}
\end{align}
therefore
\begin{align}
\abs{\delta\lambda}&\le \fracd{\left( \abs{\evec(t)}, \abs{\rvec(t-\bar s-\delta s/2)} \right)}{\abs{(\evec(t),\pvec(t))}}
\espo{-\real{\bar\lambda}(\bar s+\delta s/2)}\nonumber\\
&\qquad \times \delta s\max_{t\in[0,T]} \norm{\Kmat(t,t-\bar s-\delta s/2)}.
\label{AAAproj2}
\end{align}

Since $\alpha\exp(\real{\bar\lambda}\alpha)$ is a monotonic function of $\alpha$, $\bar\tau\in[t-\bar s,t]$ exists such that
\begin{align}
&\int_{t-\bar s}^t \Kmat(t,\tau)\bar\rvec(\tau)(\tau-t)\espo{\bar\lambda(\tau-t)}~\D\tau \nonumber\\
&\qquad = \Kmat(t,\bar\tau)\bar\rvec(\bar\tau)\espo{\gei\imag{\bar\lambda}(\bar\tau-t)}\nonumber\\
&\qquad\quad\times \int_{t-\bar s}^t (\tau-t)\espo{\real{\bar\lambda}(\tau-t)}~\D\tau\nonumber\\
&\qquad = \Kmat(t,\bar\tau)\bar\rvec(\bar\tau)\espo{\gei\imag{\bar\lambda}(\bar\tau-t)}\espo{-\real{\bar\lambda}\bar s}\nonumber\\
&\qquad\quad\times \left[ \fracd{\bar s}{\real{\bar\lambda}} + \fracd{1}{\real{\bar\lambda}^2} - \fracd{\espo{\real{\bar\lambda}\bar s}}{\real{\bar\lambda}^2} \right].
\label{AAAB}
\end{align}
Defining $\pvec'(t)=\pvec(t)\exp\left(\real{\bar\lambda}\bar s\right)$, from \eqref{kuwa} and \eqref{AAAB} we find that for $\bar s\gg 0$
\begin{subequations}
\label{baba}
\begin{align}
& \pvec'(t)\approx -\fracd{\Kmat(t,\bar\tau)\bar\rvec(\bar\tau)\espo{\gei\imag{\bar\lambda}(\bar\tau-t)}}{\real{\bar\lambda}}\bar s
 \nonumber\\
&\qquad\qquad \qquad\qquad \qquad \qquad \qquad  \up{if $\real{\bar\lambda}<0$}\\
&
\pvec'(t)\approx\left[ \bar\rvec(t)+\fracd{\Kmat(t,\bar\tau)\bar\rvec(\bar\tau)\espo{\gei\imag{\bar\lambda}(\bar\tau-t)}}{\real{\bar\lambda}^2}\right]\espo{\real{\bar\lambda}\bar s}
 \nonumber\\
&\qquad\qquad \qquad\qquad \qquad \qquad \qquad  \up{if $\real{\bar\lambda}>0$}.
\end{align}
\end{subequations}
Since $\Kmat(t,\tau)$ satisifies \eqref{cond1}, $0<M<+\infty$ exists such that
\begin{align}
&\fracd{\abs{(\abs{\evec},\abs{\bar\rvec(t-\bar s-\delta s/2)})}}{\abs{(\evec,\pvec)}}=\fracd{\abs{(\abs{\evec},\abs{\bar\rvec(t-\bar s-\delta s/2)})}}{\abs{(\evec,\pvec')}}\nonumber\\
&\quad\times
\espo{\real{\bar\lambda}\bar s}
\le M \espo{\real{\bar\lambda}\bar s}.
\end{align}
Therefore, because of \eqref{baba}, for $\bar s\gg 0$, $0<M_+,M_-\le M$ exist such that
\begin{subequations}
\label{mama}
\begin{align}
& \fracd{\abs{(\abs{\evec},\abs{\bar\rvec(t-\bar s-\delta s/2)})}}{\abs{(\evec,\pvec')}}\le\fracd{M_-}{\bar s}
 \nonumber\\
&\qquad\qquad \qquad\qquad \qquad \qquad \qquad  \up{if $\real{\bar\lambda}<0$}\\
&
\fracd{\abs{(\abs{\evec},\abs{\bar\rvec(t-\bar s-\delta s/2)})}}{\abs{(\evec,\pvec')}}\le M_+\espo{-\real{\bar\lambda}\bar s}
 \nonumber\\
&\qquad\qquad \qquad\qquad \qquad \qquad \qquad  \up{if $\real{\bar\lambda}>0$}.
\end{align}
\end{subequations}
Accordingly, to first order in $\delta s$ \eqref{AAAproj2} becomes
\begin{subequations}
\label{QQQ}
\begin{align}
\abs{\delta\lambda}&\le M \delta s \max_{t\in[0,T]} \norm{\Kmat(t,t-\bar s-\delta s/2)}
\label{AAAproj3}
\end{align}
and, for $\bar s\gg0$
\label{bibi}
\begin{align}
&
\abs{\delta\lambda}\le \fracd{M_-}{\bar s} \delta s \max_{t\in[0,T]} \norm{\Kmat(t,t-\bar s-\delta s/2)}
 \nonumber\\
&\qquad\qquad \qquad\qquad \qquad \qquad \qquad  \up{if $\real{\bar\lambda}<0$}\\
&
\abs{\delta\lambda}\le M_+ \espo{-\real{\bar\lambda}\bar s} \delta s \max_{t\in[0,T]} \norm{\Kmat(t,t-\bar s-\delta s/2)}
 \nonumber\\
&\qquad\qquad \qquad\qquad \qquad \qquad \qquad  \up{if $\real{\bar\lambda}>0$}.
\end{align}
\end{subequations}

Let us now consider any $s>\bar s$. We divide the interval $[\bar s,s]$ into $N$ sub-intervals of size $\delta s=(s-\bar s)/N$, and denote as $\delta\lambda_j$ the Floquet exponent variation due to the $j$-th interval (with respect to the value attained at the beginning of the interval itself), such that
\begin{equation}
\abs{\lambda-\lambda_j}=\abs{\sum_{j=1}^N\delta\lambda_j}\le\sum_{j=1}^N\abs{\delta\lambda_j}.
\end{equation}
Taking the limit for $N\rightarrow +\infty$ and using \eqref{QQQ}, we find \eqref{conditions}.

\section{Proof of Theorem~\ref{theo1}}
\label{theoproof}

Exploiting a procedure akin to that in Appendix~\ref{Lemmaproof}, it is easy to prove that both $\delta \vvec(t)$ and  $\delta\rvec(t)$ tend to zero for large $\bar s$. In particular, for $\bar s\gg 0$ and $s>\bar s$
\begin{subequations}
\begin{align}
&
\norm{\rvec(t)-\bar\rvec(t)}\le H_-\int_{t-s}^{t-\bar s} \max_{t\in[0,T]} \norm{\Kmat(t,\tau)}~\D\tau
 \nonumber\\
&\qquad\qquad \qquad\qquad \qquad \qquad \qquad  \up{if $\real{\bar\lambda}<0$}\\
&
\norm{\rvec(t)-\bar\rvec(t)}\le H_+\int_{t-s}^{t-\bar s} \espo{-\real{\bar\lambda}(t-\tau)} \nonumber\\
&\quad \qquad \qquad \qquad \qquad \qquad \times\max_{t\in[0,T]} \norm{\Kmat(t,\tau)}~\D\tau
 \nonumber\\
&\qquad\qquad \qquad\qquad \qquad \qquad \qquad  \up{if $\real{\bar\lambda}>0$}.
\end{align}
\end{subequations}
where $H_-, H_+$ are positive constants. This result, together with Lemma~\ref{lemma1}, implies that for $\bar s\gg0$ and for any $s>\bar s$ the solution of
\begin{equation}
\totder{\rvec}{t}+\lambda\rvec(t)=\Amat(t)\rvec(t)+\int_{t-s}^t \Kmat(t,\tau)\rvec(\tau)\espo{\lambda(\tau-t)}~\D\tau
\end{equation}
becomes independent of $s$. Since this may happen only if the integral is independent of $s$, the latter should converge even if $\real{\lambda}<0$. Defining the critical exponent as in \eqref{tembo}, the integral may converge independently of $s$ if \eqref{simba1} is met.

\section{The Harmonic Balance approach}
\label{AHB}

Harmonic Balance (HB) is a powerful numerical technique used to transform differential equations into algebraic systems that can be applied when the terms and the solution in the differential equation are time-periodic. As such, it is widely used in circuit analysis and design tools, see, e.g., \cite{Sangiovanni}. In other words, HB seeks directly for the time-periodic solution without any explicit time-domain integration. Thus the transient part of the solution is avoided altogether.

Consider first a scalar, real function $\alpha(t)$, whose frequency domain representation is built by means of the (truncated) exponential Fourier series
\begin{equation}
\alpha(t)=\sum_{h=-N_\up{H}}^{N_\up{H}} \tilde{\alpha}_{h}\espo{\gei h\omega_0t}
\label{fs}
\end{equation}
where $\tilde{\alpha}_{h}$ is the $h$-th harmonic amplitude associated to the (angular) frequency $h\omega_0=h2\pi/T$ ($h$-th harmonic). Since $\alpha(t)$ is real, $ \tilde{\alpha}_{-h}=\tilde{\alpha}_{h^\star}$ ($^\star$ denotes complex conjugation): therefore only $2N_\up{H}+1$ real coefficients fully define the Fourier series. For numerical implementation, \eqref{fs} is replaced by the more effective trigonometric series representation \cite{Sangiovanni}. However, we will stick here to the exponential form for what concerns theoretical developments.

The $]0,T]$ fundamental period is discretized in a set of $2N_\up{H}+1$ time samples $t_k$ ($k=1,\dots,2N_\up{H}+1$), and the collection of the time sampled variable  $\breve{\alphavec}=[\alpha(t_1),\alpha(t_2),\dots,\alpha(t_{2N_\up{H}+1})]^\up{T}$ is put in relation with the collection of harmonic amplitudes $\tilde{\alphavec}=[\tilde{\alpha}_{-N_\up{H}},\tilde{\alpha}_{-N_\up{H}+1},\dots,\tilde{\alpha}_{0},\dots,\tilde{\alpha}_{N_\up{H}}]^\up{T}$ by means of the discrete Fourier transform (DFT) invertible linear operator $\Gammamat^{-1}$
\begin{equation}
\breve{\alphavec}=\Gammamat^{-1}\tilde{\alphavec} \Longleftrightarrow \tilde{\alphavec}=\Gammamat\breve{\alphavec}.
\label{DFT}
\end{equation}
Clearly, for $N_\up{H}\rightarrow\infty$ $\Gammamat^{-1}$ is the matrix representation of the operator defining the Fourier series representation of a $T$-periodic function.

In the frequency domain the time derivative is represented by a diagonal matrix $\Omegamat\in\mathbb{C}^{(2N_\up{H}+1)\times(2N_\up{H}+1)}$ proportional to $\omega_0$ \cite{Sangiovanni}
\begin{equation}
\tilde{\dot{\alphavec}}=\Gammamat\breve{\dot{\alphavec}}=\Omegamat\tilde{\alphavec}
\label{der}
\end{equation}
where $\dot{\alpha}(t)=\D\alpha/\D t$.

For a vector variable $\alphavec(t)\in\mathbb{R}^n$, \eqref{DFT} and \eqref{der} are easily generalized by expanding each time sample $\alpha(t_i)$ into a vector $\alphavec(t_i)\in\mathbb{R}^n$, and therefore defining the collection $\breve{\alphavec}=[\alphavec^\up{T}(t_1),\alphavec^\up{T}(t_2),\dots,\alphavec^\up{T}(t_{2N_\up{H}+1})]^\up{T}\in\mathbb{R}^{n(2N_\up{H}+1)}$. Correspondingly, the HB representation is $\tilde{\alphavec}=[\tilde{\alphavec}^\up{T}_{-N_\up{H}},\dots,\tilde{\alphavec}^\up{T}_{0},\dots,\tilde{\alphavec}^\up{T}_{N_\up{H}}]^\up{T}\in\mathbb{C}^{n(2N_\up{H}+1)}$. This allows to formally maintain \eqref{DFT} and \eqref{der} by defining two block diagonal matrices $\Gammamat_n^{-1}$ and $\Omegamat_n$ built replicating $n$ times the fundamental operators $\Gammamat^{-1}$ and $\Omegamat$
\begin{equation}
\breve{\alphavec}=\Gammamat_n^{-1}\tilde{\alphavec} \qquad \tilde{\dot{\alphavec}}=\Omegamat_n\tilde{\alphavec}.
\end{equation}

The HB representation of $\betavec(t)=\Tmat(t)\alphavec(t)$ (i.e., the convolution in frequency domain) and of its time derivative, where $\Tmat(t)$ is a $T$-periodic matrix and $\alphavec(t)$ a $T$-periodic vector, is derived as follows. Denoting as $\breve{\Tmat}$ the $n\times n$ block diagonal matrix built expanding each element $t_{h,k}(t)$ of $\Tmat(t)$ as a $(2N_\up{H}+1)\times(2N_\up{H}+1)$ diagonal matrix formed by the time samples $\breve{\tvec}_{h,k}$, we have
\begin{equation}
\tilde{\betavec}=\tilde{\Tmat}\tilde{\alphavec} \qquad
\tilde{\dot{\betavec}}=\Omegamat_n\tilde{\betavec}=\Omegamat_n\tilde{\Tmat}\tilde{\alphavec}
\label{simba}
\end{equation}
where $\tilde{\Tmat}=\Gammamat_n\breve{\Tmat}\Gammamat_n^{-1}$ is  a Toepliz matrix  built assembling the Fourier coefficients of the elements of $\Tmat(t)$ \cite{Sangiovanni}.




\bibliography{genfloq}

\end{document}